\theoremstyle{plain}
\newtheorem{theorem}{Theorem}
\theoremstyle{definition}
\newtheorem{definition}[theorem]{Definition}
\newcommand{\set}{S}
\newcommand{\dcp}{\set_N} 
\newcommand{\sizedcp}{N} 
\newcommand{\sdcp}{n} 
\newcommand{\ip}{\set_M} 
\newcommand{\sizeip}{M}
\newcommand{\sip}{m} 
\newcommand{\ns}{\mathcal{D}} 
\def\BibTeX{{\rm B\kern-.05em{\sc i\kern-.025em b}\kern-.08em
    T\kern-.1667em\lower.7ex\hbox{E}\kern-.125emX}}
\begin{document}

\title{Enabling Privacy-preserving Model Evaluation in Federated Learning via Fully Homomorphic Encryption\\
}
\author{

\IEEEauthorblockN{Cem Ata Baykara}
\author{
    \IEEEauthorblockN{Cem Ata Baykara\IEEEauthorrefmark{1}\IEEEauthorrefmark{2}\IEEEauthorrefmark{3}, 
    Ali Burak Ünal\IEEEauthorrefmark{1}\IEEEauthorrefmark{2}, 
    Mete Akgün\IEEEauthorrefmark{1}\IEEEauthorrefmark{2}}
    \\
    \IEEEauthorblockA{\IEEEauthorrefmark{1}Medical Data Privacy and Privacy-Preserving Machine Learning, 
    University of Tübingen, Tübingen, 72076, Germany}
    \\
    \IEEEauthorblockA{\IEEEauthorrefmark{2}Institute for Bioinformatics and Medical Informatics, 
    University of Tübingen, Tübingen, 72076, Germany}
    \\
    \IEEEauthorblockA{\IEEEauthorrefmark{3}Corresponding author(s). E-mail(s): cem.baykara@uni-tuebingen.de}
    
}
}

\maketitle

\begin{abstract}
Federated learning has become increasingly widespread due to its ability to train models collaboratively without centralizing sensitive data. While most research on FL emphasizes privacy-preserving techniques during training, the evaluation phase also presents significant privacy risks that have not been adequately addressed in the literature. In particular, the state-of-the-art solution for computing the area under the curve (AUC) in FL systems employs differential privacy, which not only fails to protect against a malicious aggregator but also suffers from severe performance degradation on smaller datasets. Moreover, although the individual data label privacy is protected, this DP-based approach reveals side information such as noisy statistics, as well as individual client's and global AUC scores, making the approach not applicable for more sensitive real-life applications.

To overcome these limitations, we propose a novel evaluation method that leverages fully homomorphic encryption. To the best of our knowledge, this is the first work to apply FHE to privacy-preserving model evaluation in federated learning while providing verifiable security guarantees. In our approach, clients encrypt their true-positive and false-positive counts based on predefined thresholds and submit them to an aggregator, which then performs homomorphic operations to compute the global AUC without ever seeing intermediate or final results in plaintext. We offer two variants of our protocol: one secure against a semi-honest aggregator and one that additionally detects and prevents manipulations by a malicious aggregator. Besides providing verifiable security guarantees, our solution achieves superior accuracy across datasets of any size and distribution, eliminating the performance issues faced by the existing state-of-the-art method on small datasets and its runtime is negligibly small and independent of the test-set size. Experimental results confirm that our method can compute the AUC among 100 parties in under two seconds with near-perfect (99.93\%) accuracy while preserving complete data privacy.
\end{abstract}

\begin{IEEEkeywords}
Federated Learning, Data Privacy, Homomorphic Encryption
\end{IEEEkeywords}

\section{Introduction}

Today, machine learning has become a powerful tool applicable to countless fields. However, with increasing concerns about data privacy and confidentiality, governments worldwide are focusing more on regulations such as the GDPR\footnote{General Data Protection Regulation, European Union} and CCPA\footnote{California Consumer Privacy Act}, which regulate how personal data should be managed. As a result, privacy-preserving machine learning techniques like Federated Learning (FL) have received significant research interest \cite{li2020federated, MLSYS2019_bd686fd6, MOTHUKURI2021619}.

FL is defined as a setting where multiple input parties,  collectively train a machine learning model, often through a central server (e.g. service provider, aggregator), while keeping their local data private \cite{kairouz2021advances}. Although the term FL was first proposed with cross-device applications in mind \cite{mcmahan2017communication, brendan2017blog}, the significance of utilizing FL in other fields has vastly increased, which includes less number of more reliable participants (e.g. collaborating organizations, hospitals) \cite{kairouz2021advances}. Data partition between the input parties plays an important role in the operation of the FL system. Based on how the data is partitioned across participants, FL is defined in two categories: \textit{Horizontal} and \textit{Vertical} FL. In this paper, our focus is on \textit{Horizontal} FL, where data is split by entity (e.g. a patient), and the data entities owned by each party are disjoint from those owned by other parties.

Current research on the privacy of FL systems mostly focuses on the training phase, where the model architecture or parameters are communicated or shared between the input parties or the aggregator \cite{zhao2020idlg, zhu2019deep, wei2020federated}. However, in practice, after the training phase is completed, the input parties also need to test the performance of the model on test data. Input parties with a limited number of test samples cannot accurately assess the performance of the collaboratively trained global model. This problem can be solved through a cooperative solution where the test samples of all input parties are used for performance evaluation. However, as with the training samples, sharing test samples would cause privacy problems \cite{stoddard2014differentially, matthews2013examination}. Therefore, a privacy-preserving solution for collaborative model performance evaluation in FL systems is required.

The state-of-the-art method for privacy-preserving model performance evaluation in FL systems utilizes Differential Privacy (DP) \cite{sun2023dpauc} to compute the area under the Receiver Operating Characteristics curve. While differential privacy can protect the privacy of individual samples, it fails to prevent a semi-honest aggregator from inferring side information about the local data, such as local sample sizes, class distributions of input parties, and both local and global performance of the global model. Additionally, this method provides no security against a malicious aggregator capable of tampering with the data during computation. More importantly, our experimental results, presented in Section \ref{experiments}, demonstrate that as the size of the global test dataset decreases, the AUC values computed by this method become entirely unreliable and unusable.

To address these issues, we propose a novel approach that leverages Fully Homomorphic Encryption (FHE) to preserve privacy during model performance evaluation. Our method efficiently, accurately, and securely computes the AUC in \textit{Horizontal} FL systems while ensuring security against both semi-honest and malicious adversaries. To the best of our knowledge, our method is the first to utilize FHE for privacy-preserving model evaluation in \textit{Horizontal} FL systems while providing security against both semi-honest and malicious adversaries. While our focus is on AUC computation due to its complexity in distributed settings, we note that our method can be extended to compute other performance metrics such as accuracy, precision, and recall, with even lower computational overhead. Our contributions are as follows:

\begin{itemize}
\item We empirically demonstrate that the existing DP-based state-of-the-art method suffers from severe limitations, particularly in small-data scenarios where the added noise leads to unreliable results.

\item To the best of our knowledge, we propose the first method to apply FHE for privacy-preserving model evaluation in \textit{Horizontal} FL systems while ensuring verifiable security guarantees. Our approach provides complete data privacy and protects against both semi-honest and malicious aggregators, addressing a critical security gap in the literature.

\item Our experimental results show that our method significantly outperforms the state-of-the-art method and achieves near-perfect AUC computation accuracy regardless of data size and distribution across clients.

\item We demonstrate that our method remains highly scalable in terms of both computation time and communication cost. Despite utilizing FHE, our approach achieves AUC computation within practical time limits, making it feasible for real-world FL deployments. Notably, our method's efficiency is independent of the test set size, ensuring scalability across various application settings.
\end{itemize}


\section{Preliminaries}

In this section, we present some background information on the concepts used throughout the paper.

\subsection{ROC curve and AUC}

The ROC curve plots False Positive Rate (FPR) on the x-axis and True Positive Rate (TPR) on the y-axis. TPR for a given threshold value $\theta$ is defined as $\textit{TPR}^\theta=\frac{TP^\theta}{TP^\theta+FN^\theta}$, where True Positive is denoted as $TP$ and False Negative as $FN$. Similarly, FPR for a given threshold value $\theta$ is defined as $\textit{FPR}^\theta=\frac{FP^\theta}{FP^\theta+TN^\theta}$, where False Positive is denoted as $FP$ and True Negative as $TN$.

The AUC is obtained by measuring the area between the ROC curve and the x-axis. Calculating the AUC depends on the specific problem and available information. For a binary classification problem with $\ns$ test samples, the AUC of the ROC curve can be computed as follows:
\begin{equation}
    AUC= {\textstyle\sum_{i=1}^{\ns}} \frac{(TPR^{i} + TPR^{i-1}) \cdot (FPR^{i} - FPR^{i-1})}{2}
\label{eq:exactAUC}
\end{equation}

If we observe Equation \ref{eq:exactAUC}, we can see that the formula makes use of the TPR and FPR values to compute the AUC. However, using either of these equations poses a challenge if we would like to compute the AUC through FHE since it is difficult to perform division in FHE schemes. In order to minimize the number of division operations, Equation \ref{eq:exactAUC} can be rewritten by replacing $TPR^i$ with $\frac{TP^{i}}{TP^{\ns}}$ and $FPR^i$ with $\frac{FP^{i}}{FP^{\ns}}$ as follows:
\begin{equation}
    AUC=\frac{{\textstyle\sum_{i=1}^{\ns}} (TP^{i} + TP^{i-1}) \cdot (FP^{i} - FP^{i-1})}{2 \cdot TP^{\ns} \cdot FP^{\ns}}
\label{eq:onedivauc}
\end{equation}

Notice that only a single division operation is enough to compute the AUC. We make use of this property to compute the AUC while utilizing FHE.

\subsection{Trapezoidal Rule}

The trapezoidal rule approximates definite integrals or areas under curves \cite{yeh2002using}. It divides the curve into sub-intervals, forming trapezoids under the curve, and calculates their areas, summing them for an approximate area under the curve. The formula for approximating the area under the ROC curve resembles Equation \ref{eq:onedivauc}. Instead of using the threshold samples inside the test data, we use a set of decision points to approximate the AUC. If $\dcp$ denotes the set of $\sizedcp$ decision points partitioning $[0,1]$, the approximation formula is as follows:

\begin{equation}
AUC \approx {\textstyle\sum_{n=1}^{\sizedcp-1}} \frac{(TP^{\sdcp} + TP^{\sdcp-1}) \cdot (FP^{\sdcp} - FP^{\sdcp-1})}{2 \cdot TP^{\sizedcp-1} \cdot FP^{\sizedcp-1}}
\label{eq:traprule}
\end{equation}

where, similar to Equation \ref{eq:onedivauc}, the $TP^{\sizedcp-1}$ and $FP^{\sizedcp-1}$ values in the denominator represent the total number of positive and negative samples, respectively.

\subsection{Homomorphic Encryption}

Homomorphic encryption (HE) enables operations on encrypted data, producing results mirroring those of operations on plaintexts. Additive HE schemes, like the Paillier scheme \cite{paillier1999public}, support addition over encrypted data, while schemes such as RSA \cite{rivest1978method} and ElGamal \cite{elgamal1985public} facilitate multiplication, falling under multiplicative HE schemes. Partial HE schemes support only one operation, either addition or multiplication. FHE schemes, first introduced by Gentry \cite{gentry2009fully} using lattice theory, support both operations. FHE encompasses various schemes categorized by supported data types, with noise introduced during encryption and key generation.

\section{Related Work}

Most research on FL systems is focused on privacy preservation during the training phase \cite{zhao2020idlg, zhu2019deep, wei2020federated, sav2022privacy} with very few methods proposed for privacy preservation during the model evaluation phase \cite{unal2023ppaurora, sun2023dpauc}.

\subsection{Multi-Party Computation}

A multi-party computation based approach for calculating the exact AUC within a system having multiple data sources is proposed by \cite{unal2023ppaurora}. Although the proposed method is capable of computing the ground truth AUC, the use of privacy-preserving merge sort results in infeasible computation times. The experimental results presented by \cite{unal2023ppaurora} show that the proposed method takes approximately 1400 seconds to compute the AUC on just 8000 global data samples with 8 parties. Moreover, since there is no mechanism to prevent collusion between entities during data outsourcing, the proposed method does not provide security against malicious adversaries capable of tampering with the computation. In contrast, our approach can compute the approximate AUC with near-perfect accuracy while providing security against both semi-honest and malicious adversaries in just a few seconds.

\subsection{Differential Privacy}

The state-of-the-art method, DPAUC$_{\text{Lap}}$ \cite{sun2023dpauc}, utilizes differential privacy for AUC computation in FL systems. DPAUC$_{\text{Lap}}$, which we will refer to as DPAUC throughout the rest of this paper, employs Laplacian noise to achieve differential privacy and uses the Trapezoidal Rule to approximate the AUC value. In DPAUC, upon receiving the global model, participants in the FL system compute their TP, FP, TN, and FN statistics based on a set of decision points $\dcp$ with a fixed size. Then, each client adds Laplacian noise with a total privacy budget of $\epsilon \in \{1,2,4,8\}$ to their statistics. The noise introduced by the input parties scales with the number of decision points, where $\epsilon = \sizedcp * (\epsilon_{TP} + \epsilon_{FP} + \epsilon_{TN} + \epsilon_{FN})$. Sun et al. \cite{sun2023dpauc} report that they set $\epsilon_{TP} = \epsilon_{FP} = \epsilon_{TN} = \epsilon_{FN} = \epsilon'$, resulting in a total privacy budget of $4 \sizedcp \epsilon'$. These noisy statistics are then transmitted to the aggregator to compute the global AUC using the Trapezoidal approximation shown in Eq \ref{eq:traprule} \cite{sun2023dpauc}.

Because the amount of noise being added is independent of the data size, this leads to completely unreliable AUC approximations when the global data size is small. More specifically, when the number of test samples is low, the signal-to-noise ratio (SNR) becomes too low, as the magnitude of the added Laplacian noise remains constant while the useful signal (i.e., the true statistical counts) decreases. This limitation, not addressed in \cite{sun2023dpauc}, was mitigated in their experiments by using a large test set of 458,407 samples. However, our experiments in Section \ref{experiments} show that DPAUC yields unusable AUC estimates in small-sample settings, a common scenario in fields like medicine \cite{kononenko2001machine, foster2014machine}, a key application area for FL. 



\section{Methods}

This section first outlines our threat model and then details our approaches for computing AUC in an FL system in the presence of both the semi-honest aggregator and the malicious aggregator.

\subsection{Notations}

For the rest of the paper, we represent the number of input parties as $\ip$ whose size is $\sizeip$. We denote the set of decision points as $\dcp$ whose size is represented as $\sizedcp$. $TP^\sdcp_\sip$ and $FP^\sdcp_\sip$ represent the number of true positives and false positives of input party $Party_\sip$ on decision point $\sdcp$ where \(n \in N\), respectively. An exception being $\sdcp = \sizedcp - 1$ in the malicious setting, where $T^\sdcp_\sip$ denotes $(TP^\sdcp_\sip + TP^{\sdcp-1}_\sip)$ and $F^\sdcp_\sip$ denotes $(FP^\sdcp_\sip - FP^{\sdcp-1}_\sip)$. Moreover, in malicious setting, these values are masked by some random values. $X_\sip$ represents the dataset of the input party $Party_\sip$.

\subsection{Threat Model}

Our proposed AUC computation methods address two distinct threat models: semi-honest and malicious aggregators. Additionally, we assume non-collusion between the aggregator and input parties to ensure the security of the private key used in FHE.

In the semi-honest model, all participants, including the aggregator and input parties, follow the protocol but may attempt to infer additional information from the data they process. To mitigate this risk, our semi-honest protocol uses FHE, where each input party encrypts their data before sending it to the aggregator. The aggregator then performs computations on the encrypted data without learning anything about the underlying plaintext.

In the malicious setting, the aggregator may deviate from the protocol, attempting to manipulate computations while still trying to extract information. To overcome this, we introduce additional verification steps. Input parties randomize their values and perform computations twice, ensuring that any manipulation by the aggregator can be detected. This extension provides robust security even against a malicious aggregator as proven through our security analysis.

\subsection{Semi-honest Setting} \label{sh_fhauc}

In this section, we introduce our solution for computing AUC in a privacy-preserving way when dealing with semi-honest input parties and a semi-honest aggregator.

\begin{figure*}[ht]
\centering
\includegraphics[width=0.9\textwidth]{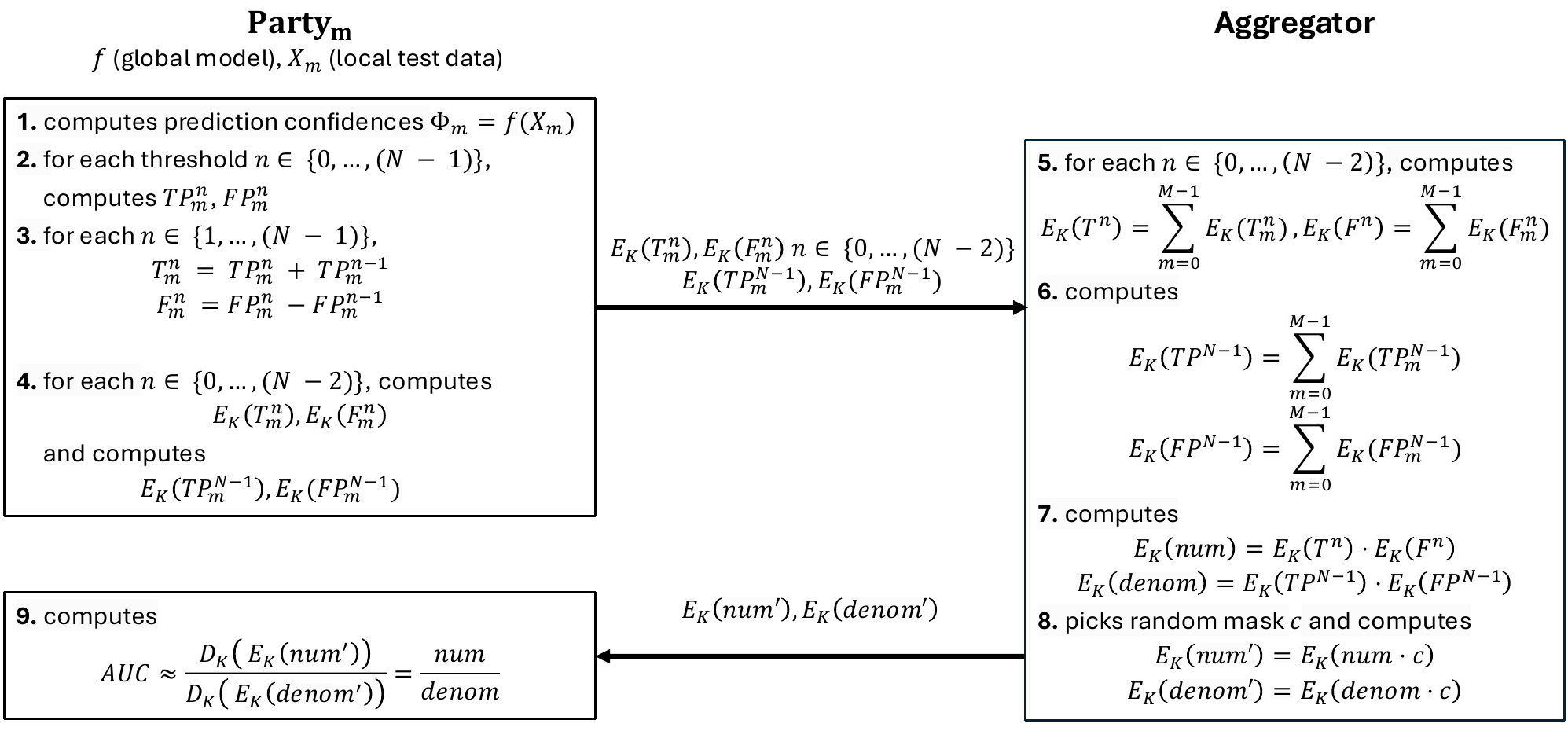} 
\caption{The workflow of the semi-honest setting.}
\label{fig:FHAUC}
\vspace{0.3cm}
\end{figure*}

\subsubsection{Setup}

The generation and distribution of a public-private key pair for FHE are managed through a public key infrastructure. The process begins with the aggregator randomly selecting an input party within the FL system to generate the key pair required for AUC computation. The selected party shares the public key with the aggregator, which is necessary for performing FHE operations on the ciphertexts, and sends the private key to other input parties by encrypting it with their public keys. The public keys of the input parties can be obtained through a certificate authority. The use of a certificate authority for the distribution of the FHE private key is a widely adopted practice in various solutions that employ FHE \cite{liu2019secure, zhang2020batchcrypt}. Furthermore, internet security mainly relies on the use of certificate authorities. While our primary focus lies in ensuring the privacy and security of the computation phase, it is important to note that the security of the key distribution phase is not within the scope of our paper. The issue of secure key distribution within a distributed environment often stands as an independent challenge, which is typically addressed by existing secure distribution protocols. This challenge also falls beyond the scope of the existing FHE-based solutions outlined in the literature \cite{liu2019secure, zhang2020batchcrypt, aono2017privacy}.

\subsubsection{Secure Computation}

After the completion of the setup phase, both the input parties and the aggregator proceed with the AUC computation. The workflow of the semi-honest setting is shown in Figure \ref{fig:FHAUC}. The algorithm consists of nine steps:

\begin{enumerate}
    \item Each input party calculates their prediction confidence scores $\Phi_\sip=f(X_\sip)$ by using the global model $f$ on their respective local test data $X_\sip$.
    
    \item For each decision point $\sdcp \in \{0,\ldots,(\sizedcp-1)\}$, each input party $Party_\sip$ computes the TP$^\sdcp_\sip$, and FP$^\sdcp_\sip$ statistics based on their local data. After computing TP$^\sdcp_\sip$ and FP$^\sdcp_\sip$, each input party $Party_\sip$ also obtains TP$^{\sizedcp-1}_\sip$ and FP$^{\sizedcp-1}_\sip$, which denote the total number of positively labeled samples and the total number of negatively labeled samples within $X_m$, respectively.
    
    \item Each input party $Party_\sip \in \{0,\ldots,(\sizeip-1)\}$ computes $T^{\sdcp}_\sip = TP^\sdcp_\sip + TP^{\sdcp-1}_\sip$ and $F^{\sdcp}_\sip = FP^\sdcp_\sip - FP^{\sdcp-1}_\sip$, where $\sdcp \in \{1,\ldots,(\sizedcp-1)\}$. Here, $TP^{\sdcp-1}_\sip$ and $FP^{\sdcp-1}_\sip$ denote the $TP^\sdcp_\sip$ and $TP^\sdcp_\sip$ vectors that have been shifted one position to the left, respectively.
    
    \item  Each input party $Party_\sip \in \{0,\ldots,(\sizeip-1)\}$ encrypts their statistics using the FHE private key as follows: $E_{K}(\text{T}^{\sdcp}_\sip)$ and $E_{K}(\text{F}^{\sdcp}_\sip)$ for each $\sdcp \in \{1,\ldots,(\sizedcp-1)\}$, and $E_K(\text{TP}^{\sizedcp-1}_\sip)$ and $E_K(\text{FP}^{\sizedcp-1}_\sip)$. These encrypted statistics are then sent to the aggregator.
    
    \item The aggregator utilizes FHE to aggregate all the encrypted $\text{T}^{\sdcp}_\sip$ and $\text{F}^{\sdcp}_\sip$ of input parties to obtain $E_{K}(\text{T}^{\sdcp})$ and $E_{K}(\text{F}^{\sdcp})$ for each $\sdcp \in \{0,\ldots,(\sizedcp-2)\}$.
    
    \item The aggregator utilizes FHE to aggregate all the encrypted $\text{TP}^{\sizedcp-1}_\sip$ and $\text{FP}^{\sizedcp-1}_\sip$ of input parties to obtain $E_{K}(\text{TP}^{\sizedcp-1})$ and $E_{K}(\text{FP}^{\sizedcp-1})$.
    
    \item The aggregator utilizes FHE to compute:
    \begin{align*}
  E_{K}(num) &= E_{K}({\textstyle\sum_{\sdcp=1}^{\sizedcp-1}}\text{T}^{\sdcp - 1} \cdot \text{F}^{\sdcp - 1}) \\
  E_{K}(denom) &= E_{K}(2 \cdot\text{TP}^{\sizedcp-1} \cdot \text{FP}^{\sizedcp-1})
    \end{align*}
    
    \item Aggregator generates a cryptographically secure random floating-point value $c$, and uses it to compute $E_K(num \cdot c)$ and $E_K(denom \cdot c)$. It then sends these computed ciphertexts to the input parties.
    
    \item Each input party decrypts the encrypted ciphertexts to obtain 
    \begin{align*}
        D_{K}(E_K(num \cdot c)) &= num' \\
        D_{K}(E_{K}(denom \cdot c)) &= denom'
    \end{align*}
    They then compute
    \begin{align*}
        \frac{num'}{denom'} =\frac{num \cdot c}{denom \cdot c}= \frac{num}{denom} \approx AUC
    \end{align*}
    resulting in an approximation of the exact global AUC value.
\end{enumerate}

\subsection{Malicious Setting}
\label{mal_fhauc}

This section details our solution for privacy-preserving AUC computation when the input parties are semi-honest and the aggregator is malicious.

In this scenario, each input party randomizes all $T$ and $F$ values, encrypts these randomized values under their private key, and sends the encrypted data to the aggregator. Both the random generation stage in the setup phase and the entire secure computation phase are each executed twice between the input parties and the aggregator. In the first run, the aggregator computes 
\[
  (r_0 \cdot \textit{num} + r_1 \cdot \textit{denom},\, r_2 \cdot \textit{denom}),
\]
and in the second run, it computes 
\[
  (r_0' \cdot \textit{num} + r_1' \cdot \textit{denom},\, r_2' \cdot \textit{denom}).
\]
Here, $r_0,r_1,r_2,r_0',r_1',r_2'$ are random values known only to the input parties. Ultimately, each input party recovers the AUC from 
\[
  \frac{r_0 \cdot \textit{num} + r_1 \cdot \textit{denom}}{r_2 \cdot \textit{denom}}
\]
and the second AUC (denoted AUC$'$) from
\[
  \frac{r_0' \cdot \textit{num} + r_1' \cdot \textit{denom}}{r_2' \cdot \textit{denom}}.
\]
If these two results match, it confirms the aggregator followed the protocol correctly; any deviation by the aggregator would cause the equality check to fail.

\subsubsection{Setup}

The setup phase for the malicious setting begins the same way as in the semi-honest setting. The input parties generate common random values \(r_3,\, r_4,\, r_5,\, r_6,\, r_7,\, r_8\), and two arrays, $tr^{n}_\sip$ and $fr^{n}_\sip$, each of size $(\sizedcp+1)\,\sizeip\,S$ containing random values. We set $r_0 = r_3 r_4$, $r_1 = r_5 r_6$, $r_2 = r_7 r_8$, where $n \in \{0,\ldots,\sizedcp\}$, $\sip \in \{0,\ldots,\sizeip-1\}$, and $S$ is the split count. The input parties also generate a common random permutation $\pi$ of size $S \cdot \sizedcp$ and an array of random bits $b$ of size $(\sizedcp+1)$.

\subsubsection{Secure Computation}

The secure computation steps are carried out as follows. Steps 1--8 are performed by the input parties, steps 9--11 by the aggregator, and step 12 again by the input parties.

\textbf{1.} Each input party $\sip$ computes its prediction confidence scores \(\Phi_\sip = f(X_\sip)\), using the global model $f$ on its local data $X_\sip$.

\textbf{2.} For each threshold $\sdcp \in \{0,\dots,\sizedcp - 1\}$ and each party $\sip$, compute local TP$^\sdcp_\sip$ and FP$^\sdcp_\sip$ based on $X_\sip$.

\textbf{3.} Each party $\sip \in \{0,\ldots,\sizeip - 2\}$ computes
\begin{align*}
    T^{\sdcp-1}_\sip = r_3 \bigl(\text{TP}^\sdcp_\sip + \text{TP}^{\sdcp-1}_\sip + tr^{\sdcp-1}_\sip\bigr), \\
  F^{\sdcp-1}_\sip = r_4 \bigl(\text{FP}^\sdcp_\sip - \text{FP}^{\sdcp-1}_\sip + fr^{\sdcp-1}_\sip\bigr),
\end{align*}

while the party $\sizeip - 1$ computes
\begin{align*}
  T^{\sdcp-1}_{\sizeip-1} 
    = r_3 \Bigl(\text{TP}^\sdcp_{\sizeip-1} + \text{TP}^{\sdcp-1}_{\sizeip-1} 
    - {\textstyle\sum_{j=0}^{\sizeip-2}} tr^{\sdcp-1}_j\Bigr), \\
  F^{\sdcp-1}_{\sizeip-1} 
    = r_4 \Bigl(\text{FP}^\sdcp_{\sizeip-1} - \text{FP}^{\sdcp-1}_{\sizeip-1} 
    - {\textstyle\sum_{j=0}^{\sizeip-2}} fr^{\sdcp-1}_j\Bigr),
\end{align*}
for $n \in \{1,\ldots,\sizedcp - 1\}$.

\textbf{4.} Similarly, each party $\sip \in \{0,\ldots,\sizeip - 2\}$ computes
\begin{align*}
  T^{\sizedcp-1}_\sip = r_5 \bigl(\text{TP}^{\sizedcp-1}_\sip + tr^{\sizedcp-1}_\sip\bigr), \\
  F^{\sizedcp-1}_\sip = r_6 \bigl(\text{FP}^{\sizedcp-1}_\sip + fr^{\sizedcp-1}_\sip\bigr),
\end{align*}
and the party $\sizeip-1$ computes
\begin{align*}
    T^{\sizedcp-1}_{\sizeip-1} 
        = r_5 \Bigl(\text{TP}^{\sizedcp-1}_{\sizeip-1} 
        - {\textstyle\sum_{j=0}^{\sizeip-2}} tr^{\sizedcp-1}_j\Bigr), \\
      F^{\sizedcp-1}_{\sizeip-1} 
        = r_6 \Bigl(\text{FP}^{\sizedcp-1}_{\sizeip-1} 
        - {\textstyle\sum_{j=0}^{\sizeip-2}} fr^{\sizedcp-1}_j\Bigr).
\end{align*}
  
\textbf{5.} Each party $\sip \in \{0,\ldots,\sizeip-2\}$ computes
\[
  DT_\sip = r_7 \bigl(\text{TP}^{\sizedcp-1}_\sip + tr^{\sizedcp}_\sip\bigr),
  \quad
  DF_\sip = r_8 \bigl(\text{FP}^{\sizedcp-1}_\sip + fr^{\sizedcp}_\sip\bigr),
\]
and the party $\sizeip-1$ computes
\begin{align*}
  DT_{\sizeip-1} 
    = r_7 \Bigl(\text{TP}^{\sizedcp-1}_{\sizeip-1} 
    - {\textstyle\sum_{j=0}^{\sizeip-2}} tr^{\sizedcp}_j\Bigr), \\
  DF_{\sizeip-1} 
    = r_8 \Bigl(\text{FP}^{\sizedcp-1}_{\sizeip-1} 
    - {\textstyle\sum_{j=0}^{\sizeip-2}} fr^{\sizedcp}_j\Bigr).
\end{align*}

\textbf{6.} Each party $\sip$ splits $T^{\sdcp}_{\sip}$ or $F^{\sdcp}_{\sip}$ into $S$ additive shares according to a random bit $b[\sdcp]$. If $b[\sdcp]$ is $0$, then 
\[
  T^{\sdcp}_{\sip} = {\textstyle\sum_{s=0}^{S-1}} T^{\sdcp}_{\sip}[s],
  \quad
  \text{and}
  \quad
  F^{\sdcp}_{\sip}[s] = F^{\sdcp}_{\sip}\;\,\forall\,s.
\]
If $b[\sdcp]$ is $1$, the roles of $T^{\sdcp}_{\sip}$ and $F^{\sdcp}_{\sip}$ are reversed.

\textbf{7.} Each party concatenates all $T^{\sdcp}_{\sip}[s]$ into a vector $T^{\text{all}}_{\sip}$ and all $F^{\sdcp}_{\sip}[s]$ into a vector $F^{\text{all}}_{\sip}$, each of size $S \cdot \sizedcp$:
\[
  T^{\text{all}}_{\sip} 
    = [T^{0}_{\sip}[0],\dots,T^{0}_{\sip}[S-1], \dots, T^{\sizedcp-1}_{\sip}[S-1]],
\]
\[
  F^{\text{all}}_{\sip} 
    = [F^{0}_{\sip}[0],\dots,F^{0}_{\sip}[S-1], \dots, F^{\sizedcp-1}_{\sip}[S-1]].
\]
Each party then permutes these vectors using the common permutation $\pi$.

\textbf{8.} Each party $\sip$ encrypts all elements of $T^{\text{all}}_{\sip}$, $F^{\text{all}}_{\sip}$, $DT_\sip$, and $DF_\sip$ with the FHE private key and sends the resulting ciphertexts to the aggregator.

\textbf{9.} For each $i \in \{0,\dots,S \cdot \sizedcp - 1\}$, the aggregator homomorphically sums the ciphertexts across parties:
\begin{align*}
    E_{K}\bigl(T^{\text{all}}[i]\bigr) 
    = E_{K}\Bigl({\textstyle\sum_{\sip=0}^{\sizeip-1}} T^{\text{all}}_{\sip}[i]\Bigr),\\
    E_{K}\bigl(F^{\text{all}}[i]\bigr) 
    = E_{K}\Bigl({\textstyle\sum_{\sip=0}^{\sizeip-1}} F^{\text{all}}_{\sip}[i]\Bigr), \\
      E_{K}(\text{DT}) 
    = E_{K}\Bigl({\textstyle\sum_{\sip=0}^{\sizeip-1}} DT_\sip\Bigr),\\
  E_{K}(\text{DF}) 
    = E_{K}\Bigl({\textstyle\sum_{\sip=0}^{\sizeip-1}} DF_\sip\Bigr).
\end{align*}

\textbf{10.} Using FHE, the aggregator computes:
\begin{align*}
E_K\Bigl({\textstyle\sum_{i=0}^{S\sizedcp-1}} T^{\text{all}}[i]\,F^{\text{all}}[i]\Bigr)
  &= E_K\bigl(r_3r_4\,\textit{num} + r_5r_6\,\textit{denom}\bigr) \\
  &= E_K\bigl(r_0\,\textit{num} + r_1\,\textit{denom}\bigr).
\end{align*}

\[
  E_{K}\bigl(\text{DT}\cdot\text{DF}\bigr)
  \;=\;E_{K}\bigl(r_7\,r_8\,\textit{denom}\bigr)
  \;=\;E_{K}\bigl(r_2 \cdot \textit{denom}\bigr).
\]

\textbf{11.} The aggregator generates a random mask $c$, then computes:
\[
  E_{K}\bigl(c \cdot (r_0\,\textit{num} + r_1\,\textit{denom})\bigr),
  \quad
  E_{K}\bigl(c \cdot r_2 \cdot \textit{denom}\bigr),
\]
and sends these ciphertexts to the input parties.

\textbf{12.} Each input party decrypts to obtain 
\[
  c \cdot (r_0\,\textit{num} + r_1\,\textit{denom}) 
  \quad\text{and}\quad 
  c \cdot r_2 \cdot \textit{denom}.
\]

\subsubsection{Verification}

The random generation phase (setup) and the secure computation phase each run twice (using the same inputs). The input parties then receive two outputs and perform the following check:
\begin{align*}
    \text{AUC} 
    = \biggl(\frac{c \cdot (r_0 \cdot \textit{num} 
      + r_1 \cdot \textit{denom})}
      {c \cdot r_2 \cdot \textit{denom}} 
      - \frac{r_1}{r_2}\biggr) 
      \cdot \frac{r_2}{r_0}
    = \frac{\textit{num}}{\textit{denom}} \\
    \text{AUC}'
    = \biggl(\frac{c' \cdot (r_0' \cdot \textit{num}
      + r_1' \cdot \textit{denom})}
      {c' \cdot r_2' \cdot \textit{denom}}
      - \frac{r_1'}{r_2'}\biggr)
      \cdot \frac{r_2'}{r_0'}
    = \frac{\textit{num}}{\textit{denom}}
\end{align*}

If $\text{AUC} = \text{AUC}'$, the aggregator is deemed to have followed the protocol honestly; any deviation would cause a mismatch in this equality check.

\section{Outsourced Division}

Performing division in FHE is challenging, as all current FHE schemes, regardless of generation, natively support only addition and multiplication but not division. In our method, we explored two numerical approximation techniques to allow the aggregator to perform the division between \textit{num} and \textit{denom} homomorphically and directly share the resulting AUC score with the clients. However, both methods presented significant challenges.

\subsection{Chebyshev Approximation}

This approach requires predefined upper and lower bounds to construct an accurate polynomial approximation of the reciprocal function \cite{rivlin1974chebyshev}. However, in our setting, depending on the number of local test samples held by each client, the \textit{denom} value of the AUC can range anywhere from \(10^3\) to \(10^{10}\), making it infeasible to pick a single interval that guarantees accuracy. We experimented with having all clients scale their \(TP\) and \(FP\) vectors prior to encryption, effectively scaling the resulting \textit{num} and \textit{denom} into a predefined range so the aggregator could accurately perform the approximation and compute the AUC score directly. However, this approach requires information sharing between clients, such as the maximum and minimum number of local test samples held by participants in the system.

\subsection{Newton-Raphson Approximation}

This approach is iterative and requires an initial guess close to the reciprocal for efficient convergence \cite{cetin2015arithmetic}. Similar to the Chebyshev approximation, without prior knowledge of the range of the \textit{denom}, selecting an accurate starting point is difficult. Without this knowledge, the number of iterations required for an accurate approximation becomes excessive, leading to an impractical number of ciphertext-to-ciphertext multiplications. This significantly increases computational cost and reduces efficiency, making it unsuitable for real-time federated learning applications. Scaling is again possible but requires information sharing between the parties.

\subsection{Our Approach}

To overcome these challenges, our proposed method outsources the division step to the clients through a masking approach. By applying a multiplicative random mask to hide the actual values of \textit{num} and \textit{denom}, our method eliminates the need for information sharing between the parties or with the aggregator. It also requires no prior knowledge of the denominator’s range while maintaining computational efficiency. In our security analysis (see Appendix A), we prove that our approach of masking and outsourcing the division step to the clients reveals no more information to the clients than directly outputting the AUC score.


\section{Security Analysis}
A detailed security analysis of our protocol is provided in Appendix. The analysis considers both semi-honest and malicious aggregators and demonstrates that the proposed method ensures input confidentiality and output correctness under standard security assumptions. We refer the reader to the appendix for formal definitions, threat models, and detailed proofs.


\section{Experiments} \label{experiments}

\begin{table*}[t]
\centering
\caption{Performance analysis of our method compared to DPAUC on Criteo dataset with $\epsilon=\{1,8\}$ and 15 parties.}\label{table:results_criteo}

\begin{tabular*}{\textwidth}{@{\extracolsep{\fill}} cccc ccc}
\toprule
Data Size & TensorFlow & scikit-learn & $\sizedcp$ & Ours & DPAUC \cite{sun2023dpauc} ($\epsilon=1$) & DPAUC \cite{sun2023dpauc} ($\epsilon=8$) \\
\midrule
                         &            &            & $25$  & $0.764580 \pm 0.0$ & $112.8856 \pm 1029.426$ & $40.56802 \pm 72.81583$ \\
   $10^{2}$              & $0.764792$ & $0.765417$ & $50$  & $0.761465 \pm 0.0$ & $66.11053 \pm 3486.300$ & $-6.93354 \pm 1280.965$ \\
                         &            &            & $100$ & $0.764411 \pm 0.0$ & $192.3464 \pm 4973.634$ & $29.72084 \pm 4835.073$ \\
    \midrule
                         &            &            & $25$  & $0.752502 \pm 0.0$ & $69.507704 \pm 725.146$ & $0.786719 \pm 0.212634$ \\
   $10^{3}$              & $0.752482$ & $0.752500$ & $50$  & $0.752444 \pm 0.0$ & $-8.473372 \pm 1237.06$ & $2.572270 \pm 19.06697$ \\
                         &            &            & $100$ & $0.752375 \pm 0.0$ & $42.184663 \pm 1757.05$ & $42.63417 \pm 62.44075$ \\
    \midrule
                         &            &            & $25$  & $0.745805 \pm 0.0$ & $0.792406 \pm 0.236237$ & $0.746144 \pm 0.028154$ \\
   $10^{4}$              & $0.746190$ & $0.746184$ & $50$  & $0.746052 \pm 0.0$ & $1.032923 \pm 0.953103$ & $0.754784 \pm 0.049641$ \\
                         &            &            & $100$ & $0.746266 \pm 0.0$ & $-8.722478\pm 113.148$  & $0.741087 \pm 0.107208$ \\
    \midrule
                         &            &            & $25$  & $0.744820 \pm 0.0$ & $0.744391 \pm 0.004721$ & $0.744792 \pm 0.000549$ \\
   $4.58 \times 10^{5}$  & $0.745416$ & $0.745416$ & $50$  & $0.745283 \pm 0.0$ & $0.745462 \pm 0.010124$ & $0.745351 \pm 0.001053$ \\
                         &            &            & $100$ & $0.745380 \pm 0.0$ & $0.749182 \pm 0.018712$ & $0.745672 \pm 0.001883$ \\
    
    \bottomrule
\end{tabular*}

\end{table*}

\begin{table*}[t]
\centering
\caption{Performance analysis of our method compared to DPAUC on CelebA dataset with $\epsilon=\{1,8\}$ and 15 parties.}
\label{table:gender}

\begin{tabular*}{\textwidth}{@{\extracolsep{\fill}} cccc ccc}
\toprule
Data Size & TensorFlow & scikit-learn & $\sizedcp$ & Ours & DPAUC \cite{sun2023dpauc} ($\epsilon=1$) & DPAUC \cite{sun2023dpauc} ($\epsilon=8$) \\
\midrule
          &            &              & 25   & $0.893400 \pm 0.0$       & $63.00177 \pm 723.6120$  & $-0.97248 \pm 28.83271$ \\
$10^{2}$  & $0.891400$ & $0.891600$   & 50   & $0.888800 \pm 0.0$       & $87.28492 \pm 787.6992$  & $-43.2448 \pm 496.1687$ \\
          &            &              & 100  & $0.892600 \pm 0.0$       & $244.4928 \pm 2981.016$  & $-93.4737 \pm 657.0762$ \\
\midrule
          &            &              & 25   & $0.937070 \pm 0.0$       & $8.817562 \pm 408.6866$  & $0.991520 \pm 0.320545$ \\
$10^{3}$  & $0.938614$ & $0.938588$   & 50   & $0.938250 \pm 0.0$       & $127.9842 \pm 1369.205$  & $2.261357 \pm 9.195085$ \\
          &            &              & 100  & $0.938700 \pm 0.0$       & $253.1984 \pm 2616.424$  & $8.875182 \pm 50.45507$ \\
\midrule
          &            &              & 25   & $0.909860 \pm 0.0$       & $1.009118 \pm 0.249137$  & $0.904852 \pm 0.027761$ \\
$10^{4}$  & $0.911410$ & $0.911406$   & 50   & $0.910900 \pm 0.0$       & $1.387757 \pm 1.807835$  & $0.914889 \pm 0.057772$ \\
          &            &              & 100  & $0.911290 \pm 0.0$       & $10.76247 \pm 102.0806$  & $0.894566 \pm 0.115002$ \\
\midrule
          &            &              & 25   & $0.950080 \pm 0.0$       & $0.947500 \pm 0.022812$  & $0.950197 \pm 0.003212$ \\
$10^{5}$  & $0.954020$ & $0.954052$   & 50   & $0.952370 \pm 0.0$       & $0.950787 \pm 0.049092$  & $0.953180 \pm 0.006660$ \\
          &            &              & 100  & $0.953422 \pm 0.0$       & $0.980881 \pm 0.103526$  & $0.953366 \pm 0.011731$ \\
\bottomrule
\end{tabular*}

\end{table*}

We implemented our methods using OpenFHE \cite{al2022openfhe}, an open-source Fully Homomorphic Encryption library that supports the CKKS scheme \cite{cheon2017homomorphic}, allowing approximate computations over real or complex number vectors. In all our experiments, we use a scaling factor length of 50 bits and set the security level to \texttt{HEStd\_128\_classic}, providing at least a 128-bit security level against the aggregator. All the experiments were conducted on a MacBook Pro with an Apple M2 Pro chip, utilizing a single core and 32 GB of memory.

\textbf{Datasets:} We test the performance of our method and how it compares against DPAUC on two different datasets. First, we used the Criteo\footnote{https://www.kaggle.com/c/criteo-display-ad-challenge/data} dataset, consistent with Sun et al.'s evaluation of DPAUC. Criteo is a large-scale industrial binary classification dataset with approximately 45 million samples, consisting of 26 categorical and 13 real-valued features. We followed the data pre-processing steps outlined by Sun et al. to maintain consistency \cite{sun2023dpauc}. Mirroring Sun et al.'s approach, our training data comprised $90\%$ of the entire Criteo dataset, with the remaining $10\%$ used for test data. As for our second dataset, we used celebA, an image classification dataset\footnote{https://www.kaggle.com/datasets/ashishjangra27/gender-recognition-200k-images-celeba}. This dataset comprises over 200,000 images of celebrities for gender prediction tasks. Each image is of size $178 \times 218$ pixels and is in color. We chose an image classification dataset to maximize the number of threshold samples, thereby allowing for a clearer observation of our proposed method's performance. To assess the performance of both our method and DPAUC across various global data sizes, we adjusted the train-to-test data ratios while keeping 10,000 validation samples constant.

\textbf{Model:} For the Criteo data, we trained a modified DLRM (Deep Learning Recommendation Model) architecture presented in \cite{naumov2019deep} for 3 epochs. As for the gender classification data, we employed a basic CNN architecture comprising 13 layers and trained the model for 4 epochs for each train-test split ratio. It is crucial to emphasize that our primary objective is to showcase the performance of our proposed AUC computation method and how it compares to the state-of-the-art, and thus, the performances of the models are not our primary concern.

\textbf{Ground-truth AUC:} To evaluate the approximation performance, we utilize two AUC computation libraries: TensorFlow\footnote{https://www.tensorflow.org} and scikit-learn\footnote{https://scikit-learn.org}. We present their computed results as the ground-truth and compare them with our method and DPAUC. Consistent with the approach of Sun et al. \cite{sun2023dpauc}, in our experiments, we set the number of thresholds to 1,000 for TensorFlow and utilized the default values for other parameters.

\textbf{Comparison Metric:} For DPAUC, following Sun et al. \cite{sun2023dpauc}, we run the same setting for 100 times and report the corresponding mean and standard deviation of the computed AUC as our comparison metric.

\subsection{Performance Analysis}

We compare the approximation performance of our proposed method and DPAUC against the ground truth AUC score computed using the libraries described above. To implement DPAUC, we utilized diffprivlib \cite{diffprivlib}, a general-purpose differential privacy library for Python developed by IBM. In all our experiments, we set $\epsilon$ to $1$ and $8$ for DPAUC, representing the smallest and largest privacy budgets used by \cite{sun2023dpauc} to evaluate the effectiveness of their method while providing the most and least privacy, respectively. We employed 15 participants in the FL system, and each data sample was uniformly distributed at random among the participants. As our method operates deterministically, the number of participants in the FL system and the data distribution among them do not influence the resulting AUC. The only factors influencing the resulting AUC of our approach are the number of decision points used by the participants and the total number of global data samples.

Tables \ref{table:results_criteo} and \ref{table:gender} present the performance of our method and DPAUC against the ground truth with varying numbers of decision points and different amounts of global data. Because the approximation performance of our method in both the semi-honest and the malicious setting is the same, we did not report it twice for convenience. In Tables \ref{table:results_criteo} and \ref{table:gender}, we can see that our method consistently outperforms DPAUC in terms of approximation performance. Moreover, increasing the number of decision points leads to improved approximation performance for our approach, but it reduces the performance of DPAUC. This is because in DPAUC, the noise introduced by the input parties scales with the number of decision points utilized by the input parties. Since the amount of noise being added in DPAUC is independent of the data size, this leads to completely unusable approximations when the global data size is small, as clearly shown in Tables \ref{table:results_criteo} and \ref{table:gender}. In contrast, the AUC scores computed by our method are robust and independent of the global data size. It is important to note that the performance of DPAUC is influenced by both IID and Non-IID data distributions, as discussed in \cite{sun2023dpauc}, while the performance of our method remains unaffected by the distribution of data among input parties.

\subsection{Computation Time Analysis}

Finally, we examined the computation time required for the aggregator to compute the AUC in both the malicious setting and the semi-honest setting. The computation times are analyzed based on the number of participants and the number of decision points used, as depicted in Figures \ref{fig:FHAUC_a} and \ref{fig:FHAUC_b}. To assess the impact of the number of decision points on the computation time, we maintain a constant number of 15 participants within the FL system. Similarly, the influence of the number of parties on the computation time is evaluated while keeping the number of decision points $\sizedcp=100$ constant. As for the malicious setting, we set the number of splits to $S=4$.

\begin{figure}[ht]
\centering
\includegraphics[width=8cm]{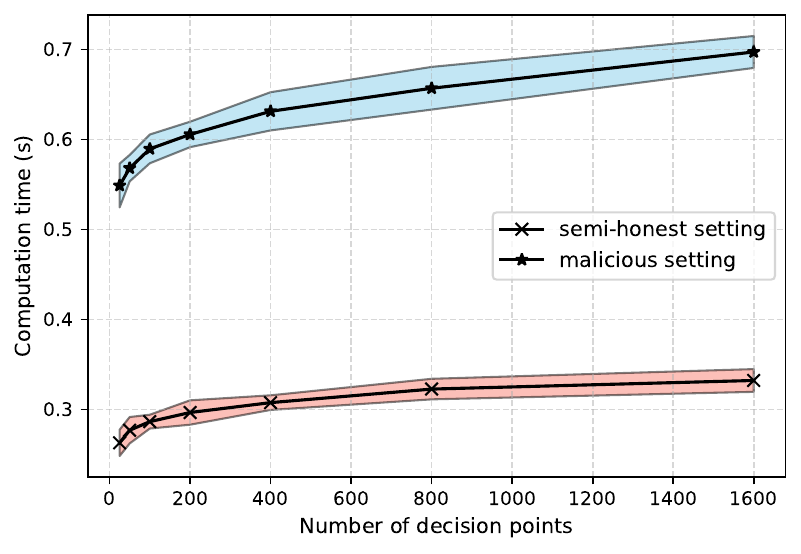}
\caption{The AUC computation time evaluated against the number of decision points for a fixed number of 15 parties.}
\label{fig:FHAUC_a}
\end{figure}

\begin{figure}[ht]
\centering
\includegraphics[width=8cm]{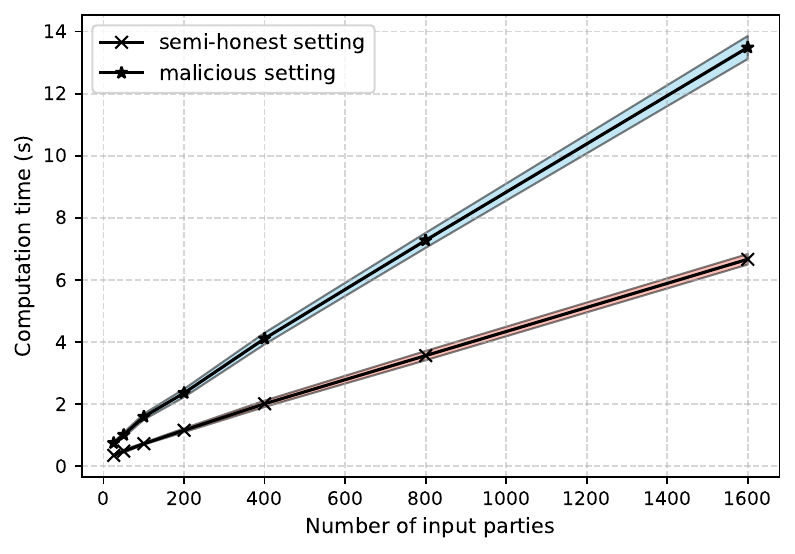}
\caption{The AUC computation time evaluated against the number of parties for a fixed number of decision points ($\sizedcp=100$).}
\label{fig:FHAUC_b}
\end{figure}

Because our approach utilizes SIMD (Single Instruction Multiple Data) support provided by OpenFHE \cite{al2022openfhe}, the AUC computation is highly efficient despite employing FHE. As an example, our method is capable of computing the AUC of an FL system with 100 parties with $99.93\%$ accuracy in just $0.68$ seconds in the semi-honest setting and $1.64$ seconds in the malicious setting, regardless of data size. As depicted in Figure \ref{fig:FHAUC_a}, we observe a logarithmic relationship between the number of decision points used and the computation time. Upon examining Figure \ref{fig:FHAUC_b}, we observe a linear relationship between the number of parties and the computation time. Comparing the semi-honest setting against the malicious setting, we see that the computation time of the malicious setting is almost double for all settings. This is to be expected because the AUC computation is performed twice in the malicious setting. Although the computation times of our method are well within the range for real-time use, it is important to emphasize that our analysis is not intended as a comparison with the DPAUC method. Since DPAUC performs all computations on plain-text, its computation time is inherently much lower. Instead, our goal is to demonstrate that despite utilizing FHE, our approach remains highly efficient and well-suited for practical applications.

\subsection{Communication Cost Analysis}

We evaluated the communication cost of our proposed method under both semi-honest and malicious security settings, from both the client and server sides. Table~\ref{table:comm_cost} presents the communication cost for varying numbers of clients.

\begin{table}[t]
\centering
\caption{Communication cost analysis of our method under semi-honest and malicious security settings for different number of clients.}
\begin{tabular}{crrr}
\toprule
\textbf{Setting} & \textbf{\# Clients} & \textbf{Client Side} & \textbf{Server Side} \\
\midrule
            & 5  & 6.81 MB  & 34.05 MB  \\
Semi-Honest & 10 & 6.81 MB  & 68.10 MB  \\
            & 15 & 6.81 MB  & 102.15 MB \\
\midrule
            & 5  & 13.62 MB & 68.10 MB  \\
Malicious   & 10 & 13.62 MB & 136.20 MB \\
            & 15 & 13.62 MB & 204.30 MB \\
\bottomrule
\end{tabular}

\label{table:comm_cost}
\end{table}





Our method's communication cost is independent of the global sample size due to the approximation technique used. Instead, it primarily depends on the ring dimension, which we set to \(2^{14}\), supporting up to \(2^{13} = 8192\) slots. As a result, in the semi-honest setting, for decision points \(N < 8192\), the communication cost remains constant for the client side, regardless of the local dataset size. Since our experiments show that \(N = 100\) already provides highly accurate AUC approximations, input parties do not need to choose larger \(N\). Hence, the communication cost for the server side is only affected by the number of input parties inside the FL system.

In the malicious setting, the communication cost is simply doubled compared to the semi-honest setting, as the AUC computation is performed twice. Specifically, as shown in Table \ref{table:comm_cost}, there is no additional increase beyond this doubling. This is because our design of random \(S\) splits does not introduce extra ciphertexts. Given that our implementation supports up to \(2^{13} = 8192\) slots, as long as \(S N < 2^{13}\), the communication cost remains limited to twice that of the semi-honest setting.

\section{Conclusion}

We proposed a novel privacy-preserving evaluation method for FL systems that eliminates the limitations of the existing state-of-the-art approach based on differential privacy. Unlike the previous method, our approach ensures complete data privacy while providing security against both semi-honest and malicious aggregators. By leveraging FHE, we enable secure and tamper-proof computation of the AUC without revealing any intermediate or final results in plaintext.

Our experimental results demonstrate that our method significantly outperforms the DP-based alternative in terms of both accuracy and reliability, particularly in scenarios with limited test data. We further show that, despite utilizing FHE, our approach remains computationally efficient, achieving AUC computation among 100 parties in under two seconds regardless of data size. Additionally, our security analysis proves that the protocol preserves privacy even in the presence of a malicious aggregator attempting to tamper with the computation.

Although we focus on AUC due to its computational and privacy challenges, our approach can be readily adapted to compute other performance metrics, many of which involve simpler operations and would benefit even more from FHE.

To the best of our knowledge, this is the first work to apply FHE to privacy-preserving model evaluation in federated learning while providing verifiable security guarantees. Our findings highlight the feasibility of FHE-based approaches for secure federated model evaluation and open new directions for research into efficient and scalable privacy-preserving performance metrics.

\bibliography{sample-base}

\newpage

\appendix
\label{appendix:security_analysis}

We analyze the security of our method for both the semi-honest and the malicious setting. Our security analysis is based on the following assumptions:

\begin{enumerate}
    \item The aggregator cannot decrypt the shared ciphertexts without access to the private key.
    \item The communication between the individual participants and the aggregator is secure.
    \item The distribution of the FHE private key is performed securely.
    \item There is no collusion between the aggregator and the input parties.
\end{enumerate}

\subsection{Security of the Multiplicative Masking}\label{sec:masking_proof}

We begin by proving that our approach of outsourcing the division operation to the clients through a random multiplicative mask is secure.

\begin{definition}[AUC]
Let $\textit{num}, \textit{denom}\in\mathbb{N}$ be two integers (with $\textit{denom}\neq 0$) such that the AUC is defined as
\[
\text{AUC} = \frac{\textit{num}}{\textit{denom}}.
\]
\end{definition}

\begin{definition}[Blinded Output]
Let $c\in\mathbb{R}^+$ be a random floating point number drawn independently from a distribution $\mathcal{L}$ (with sufficiently wide support). The blinded outputs are defined as
\[
\textit{num}' = \textit{num}\cdot c,\quad \textit{denom}' = \textit{denom}\cdot c.
\]
\end{definition}

\begin{theorem}
Assume that $\textit{num}$ and $\textit{denom}$ are drawn from some domain and that the only information leaked in the unblinded approach is the ratio 
\[
\frac{\textit{num}}{\textit{denom}},
\]
i.e., the AUC. Then, the secure computation that outputs the blinded pair 
\[
\left(\textit{num}',\textit{denom}'\right)=\left(\textit{num}\cdot c,\ \textit{denom}\cdot c\right)
\]
leaks no more information than outputting the AUC. In particular, for every candidate pair $(\textit{num},\textit{denom})$ that yields a given AUC, there exists a $c$ such that the blinded pair corresponds to that candidate. Consequently, the adversary cannot infer additional information about $\textit{num}$ and $\textit{denom}$ beyond the AUC.
\end{theorem}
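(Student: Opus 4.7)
The plan is to establish the theorem via a simulation-based indistinguishability argument: we exhibit a simulator that, on input only the value $\text{AUC}$, produces a pair whose distribution is identical (or statistically close) to the real blinded pair $(\textit{num}',\textit{denom}')=(\textit{num}\cdot c,\,\textit{denom}\cdot c)$. If such a simulator exists, then anything an adversary could compute from $(\textit{num}',\textit{denom}')$ could already be computed from $\text{AUC}$ alone, which is exactly the leakage of the unblinded protocol. I would open with this reduction so the theorem collapses to a clean statement about the image of the blinding map.

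First, I would observe that the map $\phi:(\textit{num},\textit{denom},c)\mapsto(\textit{num}\cdot c,\,\textit{denom}\cdot c)$ sends every input onto the ray $\{(x,y)\in\mathbb{R}^2_{>0}:y/x=\textit{denom}/\textit{num}\}$, and that once a candidate pair with the correct ratio is fixed, the unique point on this ray corresponding to any observed output is determined by a single choice of $c$. Hence for every observed $(a,b)$ and every alternative candidate $(\textit{num}^*,\textit{denom}^*)$ with $\textit{num}^*/\textit{denom}^*=a/b$, there exists a $c^*$ making $\phi$ consistent. This directly yields the "every candidate has a valid $c$" clause of the theorem and already establishes, in an information-theoretic sense, that the blinded pair reveals nothing beyond its ratio.

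Next, I would construct the simulator explicitly: on input $\text{AUC}$, pick any reference pair $(\textit{num}_0,\textit{denom}_0)$ with $\textit{num}_0/\textit{denom}_0=\text{AUC}$ (e.g.\ $(\text{AUC},1)$), draw $c'\sim\mathcal{L}$ independently, and output $(\textit{num}_0\cdot c',\,\textit{denom}_0\cdot c')$. I would then argue that both the real and simulated distributions are supported on the same ray and that the push-forward along the ray is governed only by the distribution of the multiplicative mask. Combined with the fiber argument above, this makes the two ensembles indistinguishable up to the statistical distance induced by a rescaling of $\mathcal{L}$.

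The main obstacle is precisely this last distributional matching: if $\mathcal{L}$ is not scale-invariant, then the law of $\textit{denom}\cdot c$ depends on the true $\textit{denom}$, whereas the simulator scales a fixed $\textit{denom}_0$ by the same $\mathcal{L}$, so the two induced laws need not coincide exactly. I would handle this by leaning on the hypothesis that $\mathcal{L}$ has sufficiently wide support, so that any rescaling of $\mathcal{L}$ by a bounded factor produces only negligible statistical distance; this is the standard rationale for high-entropy (near log-uniform) masks in arithmetic blinding, and in practice it is instantiated by sampling $c$ uniformly over a range exponentially larger than the plausible range of $\textit{denom}$. The theorem then follows by the simulation-based privacy definition with respect to the ideal functionality that releases only $\text{AUC}$.
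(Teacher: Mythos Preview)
Your argument is correct and rests on the same core observation as the paper: the blinded output lies on the ray $\{(x,y):x/y=v\}$, and for every candidate pair $(\textit{num}_0,\textit{denom}_0)$ with ratio $v$ there is a unique scalar $c_0$ that reproduces the observed $(\textit{num}',\textit{denom}')$. Where you differ is in framing and in rigor. The paper argues directly that this bijection between candidates and scalings means ``the adversary cannot prefer one candidate pair over another,'' and declares the views statistically equivalent. You instead cast this as a simulation argument with an explicit simulator $(\text{AUC},1)\cdot c'$, which is the more standard cryptographic formalization and makes precise what ``leaks no more than AUC'' means. More importantly, you surface the distributional gap the paper's proof glosses over: unless $\mathcal{L}$ is scale-invariant, the law of $\textit{denom}\cdot c$ genuinely depends on $\textit{denom}$, so the bijection alone does not give identical distributions. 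The paper acknowledges this only \emph{after} the proof, in a separate ``Practical Security Considerations'' paragraph conceding that multiplicative masking lacks established $\lambda$-bit guarantees and that ``sufficiently wide support'' is an assumption rather than a proven bound. Your treatment folds this caveat into the proof itself and names the fix (near-log-uniform, high-entropy mask), which is the more honest accounting.
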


\begin{proof}
Let $\mathcal{A}$ be an adversary whose goal is to learn additional information about $\textit{num}$ and $\textit{denom}$ beyond the AUC. In the unmasked approach, the secure computation outputs 
\[
v = \frac{\textit{num}}{\textit{denom}},
\]
so the adversary only learns the ratio $v$ and may generate many candidate pairs $(\textit{num}, \textit{denom})$ such that 
\[
\frac{\textit{num}}{\textit{denom}} = v.
\]

In the blinded approach (our method), the secure computation outputs 
\[
(\textit{num}', \textit{denom}') = (\textit{num}\cdot c,\ \textit{denom}\cdot c).
\]
Note that the ratio remains the same:
\[
\frac{\textit{num}'}{\textit{denom}'} = \frac{\textit{num}\cdot c}{\textit{denom}\cdot c} = \frac{\textit{num}}{\textit{denom}} = v.
\]
Thus, the adversary learns the same AUC as in the first case.

We now show that the additional values $\textit{num}'$ and $\textit{denom}'$ do not allow the adversary to distinguish between different candidate pairs that yield the same AUC. For any candidate pair $(\textit{num}_0,\textit{denom}_0)$ satisfying 
\[
\frac{\textit{num}_0}{\textit{denom}_0} = v,
\]
suppose that the actual pair is $(\textit{num},\textit{denom})$ and that the secure computation chooses $c$ uniformly at random from a distribution $\mathcal{L}$ over an interval with wide support. Then, the output of our method is
\[
(\textit{num}',\textit{denom}') = (\textit{num}\cdot c,\ \textit{denom}\cdot c).
\]
Given the blinded output, the adversary can compute $v$ but, for any candidate $(\textit{num}_0,\textit{denom}_0)$ with $v = \frac{\textit{num}_0}{\textit{denom}_0}$, there exists a unique $c_0$ such that 
\[
\textit{num}' = \textit{num}_0 \cdot c_0 \quad \text{and} \quad \textit{denom}' = \textit{denom}_0 \cdot c_0.
\]
That is, 
\[
c_0 = \frac{\textit{num}'}{\textit{num}_0} = \frac{\textit{denom}'}{\textit{denom}_0}.
\]
Since $c$ is drawn independently of $\textit{num}$ and $\textit{denom}$, the adversary cannot prefer one candidate pair over another; for any candidate pair that yields the same AUC, a corresponding scaling factor $c_0$ exists that maps it to the observed blinded output.

Thus, the mapping 
\[
f: (\textit{num},\textit{denom}, c) \mapsto (\textit{num}\cdot c, \textit{denom}\cdot c)
\]
is a bijection between the set of pairs $(\textit{num},\textit{denom})$ (that satisfy $v=\frac{\textit{num}}{\textit{denom}}$) and the set of possible blinded outputs (parameterized by the random choice of $c$). Consequently, the adversary's view in the second scenario is statistically equivalent to that in the first scenario (outputting only $v$), aside from the additional randomness introduced by $c$, which is independent of $\textit{num}$ and $\textit{denom}$.

Therefore, the secure computation that outputs the blinded pair $(\textit{num}', \textit{denom}')$ does not leak any additional information beyond the AUC.
\end{proof}

\subsubsection{Practical Security Considerations}

Our method employs multiplicative masking to obscure the individual values of \textit{num} and \textit{denom} during the outsourced division operation. Specifically, we release the blinded pair \((\textit{num}',\;\textit{denom}')=(\textit{num} \cdot c,\;\textit{denom} \cdot c)\) for a randomly drawn positive scalar \(c\). The primary goal of this masking is not to achieve formal $\lambda$-bit security (as defined in cryptographic masking literature), but rather to prevent the inference of the exact values of num and denom while still allowing the correct computation of their ratio.

Unlike additive masking, where concrete bounds are known to achieve $\lambda$-bit security \cite{schoenmakers2006efficient}, the literature on multiplicative masking is limited and does not offer an established standard for achieving such guarantees. In our setting, we assume \(c\) is drawn from a distribution with sufficiently wide support such that the mapping from candidate \((\textit{num},\;\textit{denom})\) pairs to the observed output \((\textit{num} \cdot c,\;\textit{denom} \cdot c)\) remains indistinguishable across all candidates yielding the same AUC. As such, our approach focuses on obfuscation of the numerator and denominator values, rather than achieving indistinguishability against computationally bounded adversaries with respect to a specific bit-security level.

It is important to note that this masking strategy is only relevant in the context of semi-honest clients, particularly in edge cases such as when there are only two clients contributing to the secure computation. In contrast, the security of our method against the aggregator is maintained through FHE, and our CKKS parameters ensure 128-bit security under the standard cryptographic assumptions. Thus, the masking mechanism serves as an additional layer of privacy against semi-honest input parties within our framework and does not affect the security of the protocol against the aggregator.

\subsection{Security Analysis of the Semi-honest Setting}

We examine the information that can be obtained by a semi-honest aggregator as well as by semi-honest input parties individually.

\subsubsection{Semi-honest Aggregator}  
The local prediction confidence values, their distributions, label information, and the number of samples for each client are never shared and, therefore, cannot be known by a semi-honest aggregator due to the nature of the algorithm. The values $\text{T}^{\sdcp}$ and $\text{F}^{\sdcp}$, alongside $\text{T}^{\sizedcp - 1}$ and $\text{F}^{\sizedcp - 1}$ for $\sdcp \in \{0,\ldots,(\sizedcp-1)\}$, are the only information that each participant shares with the aggregator. Since all of these statistics are encrypted, a semi-honest aggregator cannot obtain the plaintext values, as ensured by Assumption 1. Similarly, the \textit{num} and \textit{denom} values computed by the aggregator at the end of the algorithm remain encrypted, making it impossible for the aggregator to infer the global model's performance, again based on Assumption 1.

Hence, the only way a semi-honest aggregator could break privacy is by obtaining the FHE private key, which would allow decryption of the shared ciphertexts. However, this is not possible under our assumptions. Assumption 2 ensures that communication between participants and the aggregator is secure, preventing any eavesdropping. Assumption 3 guarantees that the private key is distributed securely, making it inaccessible to the aggregator. Lastly, Assumption 4 prevents collusion between the aggregator and input parties, ensuring that the aggregator cannot gain access to the private key through cooperation.

\subsubsection{Semi-honest Input Parties}  
The only possible source of information leakage arises from the outsourced division operation performed by the aggregator. We illustrate that allowing input parties to obtain the \textit{num} and \textit{denom} values without masking would introduce a potential privacy risk.

Consider an edge case with two input parties where the \textit{denom} is the product of two prime numbers. In this case, an input party could deduce the number of samples held by the other party, as well as its class distribution. Since the \textit{denom} of AUC is the product of the total number of positive and negative samples, a \textit{denom} consisting of two prime factors has only one valid pair of values that satisfy this equation. If an input party holds a number of positive or negative samples greater than either of the prime factors, it can precisely determine which factor corresponds to the total number of positive samples and which corresponds to the total number of negative samples. Consequently, the input party could infer the total number of samples and the class distribution of the other party.  

To prevent this, our approach applies a random multiplicative mask before outsourcing the division operation. As we show in Section~\ref{sec:masking_proof}, this masking strategy ensures that the blinded outputs \textit{num}' and \textit{denom}' preserve the privacy of the underlying values in practice, as they reveal only the AUC ratio. While the masking does not provide formal $\lambda$-bit security guarantees, it effectively obscures the individual values of \textit{num} and \textit{denom} in typical semi-honest settings.

\subsection{Security Analysis of the Malicious Setting}

This section provides a security analysis for our federated AUC computation protocol designed to resist a malicious aggregator. Specifically, our protocol leverages:
\begin{itemize}
  \item \textbf{Zero-sum masks} random offsets across input parties that sum to zero. For each run we use two diffrent zero-sum masks,

  Each count (e.g.\ $\text{TP}_\sip^\delta$ and $\text{FP}_\sip^\delta$) is \emph{masked} by adding random offsets $R_{\sip}^\delta$ such that
\[
  \sum_{\sip=0}^{\sizeip-1} R_\sip^\delta = 0.
\]
Hence, the aggregator sees ciphertexts of $\text{TP}_\sip^\delta + R_{\sip}^\delta$, but cannot recover the true $\text{TP}_\sip^\delta$ unless it also knows all $R_{\sip}^\delta$.  Similarly for $\text{FP}_\sip^\delta$.
  
  \item \textbf{Random multipliers} $(r_0, r_1, r_2)$ (and $(r_0', r_1', r_2')$ for a second run) to \emph{blind} the final \(\textit{num}\) and \(\textit{denom}\).

  After partial sums are computed (still masked), input parties multiply them by random values such that
\[
  r_0 = r_3 r_4,\quad
  r_1 = r_5 r_6,\quad
  r_2 = r_7 r_8.
\]
The aggregator thus only sees \emph{blinded} versions of $(\textit{num}, \textit{denom})$:
\[
  r_0 \cdot \textit{num} + r_1 \cdot \textit{denom},
  \quad
  r_2 \cdot \textit{denom}.
\]
  
  \item \textbf{Random permutation} \(\pi\) (and \(\pi'\) for the second run), shared among input parties but unknown to the aggregator, which hides the correspondence of ciphertexts to threshold indices.

  Each party then \emph{splits} its masked counts into additive shares and places those shares into vectors, which are subsequently \emph{permuted} by a common random permutation $\pi$ (unknown to the aggregator).  As a result, the aggregator cannot identify which ciphertext belongs to which threshold $\delta$, nor which input party contributed which share.

  \item \textbf{Double Execution + Verification}
The entire protocol is then repeated using a \emph{second} independent set $(r_0',r_1',r_2')$.  If, after decrypting, the two resulting (blinded) ratios
\[
  \frac{r_0 \cdot \textit{num} + r_1 \cdot \textit{denom}}{r_2 \cdot \textit{denom}}
  \quad\text{and}\quad
  \frac{r_0' \cdot \textit{num} + r_1' \cdot \textit{denom}}{r_2' \cdot \textit{denom}}
\]
do not match, the aggregator is immediately caught.  If they match, with high probability the aggregator followed the protocol honestly.
\end{itemize}
Our main claim is that the aggregator cannot bias or tamper with the final AUC result without being caught with overwhelming probability.

\subsubsection{Security Model}

\paragraph{Malicious Aggregator.}
The aggregator $\mathcal{A}$ can arbitrarily deviate from the specified steps (e.g.\ drop or reorder ciphertexts, inject its own ciphertexts, or modify homomorphic additions).  However, it does \emph{not} know:
\begin{itemize}
  \item The \emph{private key} for the FHE encryption scheme,
  \item The zero-sum mask offsets $R_\sip^\delta$,
  \item The random permutations $\pi$ and $\pi'$,
  \item The random multipliers $\{r_0, r_1, r_2\}$ and $\{r_0', r_1', r_2'\}$ used in the two executions.
\end{itemize}

\paragraph{Goal.}
We show that $\mathcal{A}$ cannot produce a final AUC value different from $\frac{\textit{num}}{\textit{denom}}$ \emph{without} causing a detectable mismatch in the final verification.

\textbf{Correctness:} A scheme is correct if any honest computation will decrypt to the expected result \cite{viand2023verifiable}.

We demonstrate that conducting the homomorphic computation on the encrypted randomized values, following the protocol description, consistently yields the expected results: $E_{K}(r_0 \cdot \textit{num} + r_1 \cdot \textit{denom})$ and $E_{K}(r_2 \cdot \textit{denom})$.

\textbf{Soundness:} A scheme is sound if the adversary cannot make the verification accept an incorrect answer \cite{viand2023verifiable}.

\begin{theorem}
\label{thm:security-malicious}
Let $\lambda$ be the security parameter.  Under a semantically secure FHE scheme and assuming all random values (the zero-sum offsets, multipliers, and permutation) are chosen from sufficiently large domains, the probability that a malicious aggregator $\mathcal{A}$ can \emph{cheat} (i.e.\ alter the true AUC) yet \emph{still} pass the double-execution verification is negligible in $\lambda$.
\end{theorem}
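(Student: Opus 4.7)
The plan is to reduce soundness to the claim that any nontrivial deviation by the aggregator produces, from its perspective, a pair of additive perturbations of the final $(\textit{num},\textit{denom})$ ciphertexts that are committed blindly, and that such blind perturbations survive the double-execution check only with negligible probability. I would begin by classifying the possible misbehavior of $\mathcal{A}$ into \emph{passive} manipulations (dropping, duplicating, reordering, or swapping submitted ciphertexts) and \emph{active} manipulations (injecting freshly encrypted values under the public key or substituting one homomorphic operation for another). Correctness for honest execution follows by inspecting steps 3--11: the zero-sum masks telescope away upon aggregation over all $\sizeip$ parties, and the blinders multiply out to $(r_0\cdot\textit{num}+r_1\cdot\textit{denom},\, r_2\cdot\textit{denom})$ in both runs.

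The next step is to argue that every such deviation is oblivious to the data. Semantic security of CKKS implies that $\mathcal{A}$'s view of the per-party ciphertexts is computationally independent of the underlying statistics, and even under ideal decryption the zero-sum offsets $\{tr^\sdcp_\sip\}$ and $\{fr^\sdcp_\sip\}$ make each masked contribution marginally uniform over a large domain. The common permutation $\pi$ additionally hides which ciphertext corresponds to which decision point or split index. Consequently, any manipulation that $\mathcal{A}$ can mount reduces to committing, before the randomness $r_0,r_1,r_2,\pi$ is revealed, to a pair of perturbations $(\Delta_1,\Delta_2)$ added to the final ciphertexts of run~1, and likewise $(\Delta_1',\Delta_2')$ for run~2.

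The heart of the proof is to bound the probability that such perturbations pass the verification equation. After clients apply the unmasking post-processing $(x - r_1/r_2)\cdot r_2/r_0$, a successful forgery requires
\[
\Bigl(\tfrac{r_0\cdot\textit{num}+r_1\cdot\textit{denom}+\Delta_1}{r_2\cdot\textit{denom}+\Delta_2}-\tfrac{r_1}{r_2}\Bigr)\tfrac{r_2}{r_0}
=
\Bigl(\tfrac{r_0'\cdot\textit{num}+r_1'\cdot\textit{denom}+\Delta_1'}{r_2'\cdot\textit{denom}+\Delta_2'}-\tfrac{r_1'}{r_2'}\Bigr)\tfrac{r_2'}{r_0'}.
\]
Clearing denominators yields a polynomial identity in the six independent scalars $r_0,r_1,r_2,r_0',r_1',r_2'$ of constant total degree. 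I would verify by direct inspection that this identity is nontrivial whenever any of $\Delta_1,\Delta_2,\Delta_1',\Delta_2'$ is nonzero, and then apply a Schwartz--Zippel-style bound to conclude that it is satisfied with probability at most $O(1/|\mathcal{R}|)$, which is negligible in $\lambda$ under the assumption that each $r_i$ is sampled from a domain of size exponential in $\lambda$.

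The chief obstacle, I expect, is handling \emph{adaptive} cross-run strategies: an aggregator might try to exploit transcript information from run~1 when choosing its manipulation in run~2, or use internal structure of the split-and-permute vectors to produce perturbations that correlate across runs. I would address this by framing the two runs as a single game in which all randomness (masks, multipliers, and permutations) is re-sampled freshly between runs, and by arguing via a hybrid over the intermediate ciphertexts that $\mathcal{A}$'s view in run~2 is computationally independent of the randomness used in run~1. This preserves the commit-then-sample structure, so the Schwartz--Zippel bound still applies in the adaptive setting and the two-run verification catches any tampering except with probability negligible in $\lambda$.
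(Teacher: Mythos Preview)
Your overall structure mirrors the paper's proof sketch closely: both argue that (i) FHE semantic security together with the zero-sum masks and the common permutation blind the aggregator to the threshold/party structure of the ciphertexts, and (ii) the double-execution verification catches any deviation because the blinding multipliers $(r_0,r_1,r_2)$ and $(r_0',r_1',r_2')$ are independent and secret. Where you differ is in the formalization of (ii): the paper gives ad hoc combinatorial bounds---a permutation-guessing probability $\bigl(\tfrac{S!(SN-S)!}{(SN)!}\bigr)^2$ and a $2^{-192}$ bound from guessing six 32-bit scalars---whereas you cast the verification equation as a polynomial identity in the random multipliers and invoke a Schwartz--Zippel-type argument. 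Your route is cleaner and yields a single asymptotic bound tied to the domain size, and your explicit treatment of adaptive cross-run behavior via a hybrid is something the paper's sketch does not address.

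There is, however, one step that needs repair. You write that any manipulation by $\mathcal{A}$ ``reduces to committing, before the randomness $r_0,r_1,r_2,\pi$ is revealed, to a pair of perturbations $(\Delta_1,\Delta_2)$ added to the final ciphertexts.'' This is not quite right: the aggregator commits to a \emph{homomorphic circuit} over the submitted ciphertexts, and the resulting plaintext perturbation is therefore a polynomial in the hidden scalars $r_3,\dots,r_8$ (for instance, doubling one summand before the homomorphic sum contributes a term proportional to $r_3 r_4$), not a constant fixed in advance. Your Schwartz--Zippel step still goes through once you model each $\Delta_i$ as a low-degree polynomial in the randoms with adversarially chosen coefficients, but the claim ``the identity is nontrivial whenever any $\Delta_i$ is nonzero'' must be replaced by the statement that the verification polynomial is nonzero whenever the aggregator's circuit alters the recovered AUC for \emph{some} assignment of the randomness. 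That is the correct nondegeneracy condition, and it is what lets the degree bound and the large-domain assumption carry the argument.
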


\begin{proof}[Proof Sketch]

We break down the argument into four key steps:

\paragraph{1. FHE Security and Unknown Randomizers.}
By \emph{IND-CPA (or stronger) security} of the homomorphic encryption, the aggregator $\mathcal{A}$ learns no information about plaintext values beyond what is revealed by final decryptions.  In particular, $\mathcal{A}$ does not learn:
\begin{itemize}
  \item The zero-sum offsets $R_\sip^\delta$,
  \item The random permutation $\pi$ (and $\pi'$) (which scrambles the order of threshold shares),
  \item The random multipliers $r_0, r_1, r_2$ (and $r_0', r_1', r_2'$).
\end{itemize}
These unknowns \emph{blind} the aggregator’s view of the partial sums for $\textit{num}$ and $\textit{denom}$, preventing $\mathcal{A}$ from “surgically” manipulating them to a desired value.

\paragraph{2. Zero-Sum Masks + Permutation Prevent Targeted Attacks.}
Each threshold count $\text{TP}_\sip^\delta$ (or $\text{FP}_\sip^\delta$) is merged with random offsets $R_\sip^\delta$ so that $\sum_\sip R_\sip^\delta=0$.  Moreover, each party splits these masked counts into shares and permutes them using $\pi$.  Thus:
\begin{enumerate}
  \item \emph{No Threshold Index.} $\mathcal{A}$ cannot identify which ciphertext belongs to which $\delta$ (due to $\pi$).
  \item \emph{No Party-Specific Mask.} Even if $\mathcal{A}$ guessed a ciphertext’s threshold, the local zero-sum offset from other parties is unknown, so $\mathcal{A}$ cannot isolate or adjust a single party’s counts without creating a mismatch in the global sum.
\end{enumerate}
Hence, any attempt to reorder, omit, or alter these ciphertexts breaks the aggregator’s ability to produce consistent final sums for the same thresholds across all parties. The aggregator's ability to accurately determine the order of these ciphertexts in both secure computations is constrained by a probability of $(\frac{S!(SN-S)!}{SN!})^2$ where $N$ is the number of thresholds and $S$ is the split count (the number of additive shares). The split count \(S\) is crucial for our malicious setting's security. It directly affects the probability that a malicious aggregator can alter inputs, intermediate values, or outputs such that both computations yield the same modified AUC. Increasing \(S\) while decreasing the number of decision points \(N\) lowers the aggregator's success probability. For example, with \(N=100\) and \(S=7\), the probability is \(\frac{1}{2^{107}}\); with \(N=100\) and \(S=4\), it is \(\frac{1}{2^{60}}\); and with \(N=25\) and \(S=9\), it is \(\frac{1}{2^{103}}\).

\paragraph{3. Double Execution Requires Consistency Across Independent Randoms.}
Even if $\mathcal{A}$ tried to cheat in the \emph{first} run by tampering with the homomorphic sums (aiming to produce some incorrect $\widetilde{\textit{num}}$ or $\widetilde{\textit{denom}}$), it must also replicate the \emph{same} manipulation in the \emph{second} run so that:
\[
  \frac{r_0 \cdot \widetilde{\textit{num}}
        + r_1 \cdot \widetilde{\textit{denom}}}
       {r_2 \cdot \widetilde{\textit{denom}}}
  \;\;=\;\;
  \frac{r_0' \cdot \widetilde{\textit{num}}
        + r_1' \cdot \widetilde{\textit{denom}}}
       {r_2' \cdot \widetilde{\textit{denom}}}.
\]
Since $(r_0, r_1, r_2)$ and $(r_0', r_1', r_2')$ are \emph{independent, secret} random values, $\mathcal{A}$ effectively has to \emph{guess} how to align these tampered plaintexts to yield the \emph{same masked ratio} in both runs.  The probability of success without knowledge of the random multipliers is negligible in $\lambda$. The computational infeasibility of this verification is compounded by the probability of correctly guessing only six 32-bit random numbers in two secure computations, which stands at $\frac{1}{2^{192}}$ a statistically negligible probability and computationally impractical to achieve.

\paragraph{4. Final Verification Detects Deviations.}
After the aggregator’s two runs, the input parties decrypt to obtain:
\begin{align*}
    \text{AUC} 
  = \frac{r_0 \cdot \textit{num} + r_1 \cdot \textit{denom}}{r_2 \cdot \textit{denom}}, \\
  \text{AUC}'
  = \frac{r_0' \cdot \textit{num} + r_1' \cdot \textit{denom}}{r_2' \cdot \textit{denom}}.
\end{align*}
  
They compare $\text{AUC}$ and $\text{AUC}'$. Any inconsistency caused by aggregator cheating is caught when $\text{AUC} \neq \text{AUC}'$.  If $\text{AUC}=\text{AUC}'$, the aggregator must have computed faithfully with overwhelming probability. Because of the combined effects of FHE security, zero-sum masking, unknown permutation, random multipliers, and a two-run consistency check, the aggregator cannot tamper with the final AUC without producing a detectable inconsistency in at least one run.  Hence, the probability of cheating and passing verification is negligible. This completes the proof of Theorem~\ref{thm:security-malicious}.
\end{proof}

\textbf{Security:} 
Our verification method in the malicious setting primarily relies on homomorphic decryption. To compromise the security of the verification method, the adversary would need to break the security of the underlying FHE scheme, Cheon-Kim-Kim-Song (CKKS), the security of which has been formally proven in \cite{cheon2017homomorphic}.

\subsection{Equality Check Through CKKS}

The CKKS scheme \cite{cheon2017homomorphic} introduces noise into computations, which accumulates with each operation. As a result, exact equality between two AUC scores computed in the malicious setting is not feasible in theory. However, based on our experimental results, we observe that in a practical setting, despite the noise introduced by CKKS, the maliciously secure AUC scores remain comparable up to five decimal places. Therefore, our equality check verifies the scores up to five digits of precision.

\end{document}